\newtheorem{mydef}{Definition}
\newtheorem{myteo}{Theorem}
\newtheorem{mypro}{Proposition}
\newtheorem{mycor}{Corollary}
\newtheorem{mylem}{Lemma}
\newtheorem{myexe}{Example}
\newtheorem{myalg}{Algorithm}
\DeclareMathOperator*{\argmin}{arg\,min}
\DeclareMathOperator*{\argmax}{arg\,max}
\begin{document}

\title{Channel Metrization} 

\author{Rafael Gregorio Lucas D'Oliveira \and Marcelo Firer} 
\institute{Imecc - Unicamp\\ 
\email{rgldoliveira@gmail.com, mfirer@ime.unicamp.br }}

\maketitle

\begin{abstract}

We present an algorithm that, given a channel, determines if there is a distance for it such that the maximum likelihood decoder coincides with the minimum distance decoder. 

We also show that any metric, up to a decoding equivalence, can be isometrically embedded into the hypercube with the Hamming metric, and thus, in terms of decoding, the Hamming metric is universal.
\end{abstract}

keywords: Coding Theory; Hypercube Embeddings; Maximum-likelihood Decoding; Minimum Distance Decoding. 

\section{Introduction}

The binary symmetric channel (BSC) is probably the most commonly used discrete channel in Coding and Information Theory, not only because of its simplicity, but also because it represents the worst case scenario, when there is no information about the channel. This gives the Hamming metric a prominent status since it is matched to the BSC (the minimum distance decoder is equivalent to the maximum likelihood decoder). Less attention was given to other metrics (see \cite{gabi12} and \cite{deza14}[Chapter 16] for applications of other metrics in coding theory).

The general problem of determining if a channel is matched to a metric \footnote{This problem was first asked by Massey in \cite{mass67}.}, or in other words, if a channel is metrizable (in analogy to metrizable spaces in topology) has been very little explored (see \cite{segu80} \footnote{S\'{e}guin considers families of metrics that are defined over an alphabet and which extend additively over the coordinates, which include the Hamming and Lee metrics.} and \cite{fire14} \footnote{Walker and Firer proved that a metric can be matched to the Z-Channel, which in a sense is the most anti-symmetrical of all channels.}). When this occurs, it gives more structure to the channel, and it might be useful for, among other things, the construction of efficient codes since with distances, come other related concepts such as: packing radius, perfect codes, MDS codes, etc.

We present mainly two results: An algorithm which determines if a channel is metrizable; and show that the Hamming metric is universal, in the sense that, for any metrizable channel, this metric can be used for decoding purposes.
\newline

This paper is organized in the following manner:

Section 2 presents the key notion of this paper: decoding equivalence.

In Section \ref{match} we present a metrization algorithm for channels. 

In Section \ref{sec:setpatt}, we introduce set patterns, a generalization of intersection patterns. This is the main tool to prove the theorems of Section \ref{emb}. 

In Section \ref{emb} we present our second main result. We show that any metric may be isometrically embedded (up to a decoding equivalence) into the hypercube with the Hamming metric, which we will refer to as the Hamming cube. Thus, in terms of decoding, the Hamming metric is universal (this is analogous to embedding theorems in differential topology). This means that for any metrizable space we can use the Hamming metric for minimum distance decoding. 

In Section \ref{geo} we introduce a geometrical approach which can be used to find minimal dimensional embeddings and which we believe warrants further investigation. This section contains preliminary results on future work.

\section{Distances, Channels and Decoders}

We distinguish metrics from distances (we follow the notation from \cite{deza14}).

Let $X$ be a set. A \emph{distance} is a function $d: X \times X \mapsto \mathbb{R}_{\geq 0}$ such that
\begin{enumerate}
\item $d(x,x) = 0$
\item $d(x,y) = d(y,x)$
\end{enumerate}

If the distance satisfies property 3 it is a \emph{semimetric} \footnote{There is no general consensus on this term. In topology, for example this is sometimes called a symmetric.}. If it also satisfies the triangle inequality (property 4), then it is called a \emph{metric}.

\begin{enumerate}
\setcounter{enumi}{2}
\item $d(x,y)=0 \Rightarrow x=y$
\item $d(x,z) \leq d(x,y) + d(y,z)$
\end{enumerate}

If the set $X$ is an abelian group and $d(x+z,y+z) = d(x,y)$
we say that $d$ is \emph{translation invariant}. In this case the distance is determined by a \emph{weight function}  $\omega : X \mapsto \mathbb{R}_{\geq 0}$, defined by $\omega(x): = d(x,0)$, so that $d(x,y) = \omega(x-y)$.

We are concerned only with finite sets, thus nothing is lost if we assume $X$ to be the set $[n] = \{ 1,2,3,\ldots,n \}$. We also identify the distance with its matrix representation $d_{n \times n}$ such that $d_{ij} = d(i,j)$.

We define a channel using the notation of Probability Theory.

\begin{mydef}\label{def:channel}
A channel over $[n]$ is an $n\times n$ probability matrix $P$ such that
\[P_{ij} = P(\text{$j$ received } | \text{$i$ sent}) .\]
\end{mydef}

Thus, a channel is a matrix $P_{n \times n}$ with entries in $[0,1]$ such that
$\sum_j P_{ij} = 1$.

As usual, in Information Theory, the interpretation is that a transmitter sends a symbol $i \in [n]$ to a receiver and the channel determines the probability that $j \in [n]$ is received. 

We only consider ``square'' channels, i.e. those in which the size of the sets of inputs is equal to the size of the sets of outputs, and, consequently, the probability matrix is square \footnote{The probabilities are defined over the set of elements of $X$, that may be considered as a set of all possible messages and not an alphabet over which messages are written. }. 

Given a \emph{code} $C \subseteq [n]$, a \emph{maximum likelihood decoder} (MLD) is such that $j \in [n]$ is decoded as an element of the set $\argmax \{ P(j|c):c\in C \}$.

If a distance is defined on $[n]$, a \emph{minimum distance decoder} (MDD) is such that $j \in [n]$ is decoded as an element of the set $\argmax \{ d(c,j): c\in C \}$.

Given a distance $d$ and a channel over $[n]$ we say that they are\emph{ matched} if, for every $C\subseteq [n]$ and $j\in [n]$,
\[
\argmin \{d(j,c);c\in C \} = \argmax \{ P(j|c);c\in C \},
\] 
i.e. if the MLD and the MDD coincide.

It is well known that the Hamming metric is matched to the binary symmetric channel. Very little is known about the case for general channels.

\section{Finding a Distance Matched to a Channel} \label{match}

We now present the key notion of this paper.

\begin{mydef}
Two distances $d_1$ and $d_2$ over $[n]$ are decoding equivalent, denoted by $d_1 \sim d_2$, if they define the same minimum distance decoder.

In technical terms, $d_1 \sim d_2$ if for any code $C\subseteq [n]$ and any $j\in [n]$,
\[ \argmin\{d_1(c,j): c\in C \}= \argmin \{d_2(c,j): c\in C \} .\]
\end{mydef}

In fact, $d \sim d'$ if and only if they preserve the \emph{weak ordering} (ordering allowing ties) of the distances from a fixed point.

\begin{mypro} \label{prop:ineq}
Let $d_1$ and $d_2$ be two distances over $[n]$. Then, $d_1 \sim d_2$ if and only if for every $i,k,j \in [n]$, it holds that
\[ d_1 (i,j) < d_1 (k,j) \Leftrightarrow d_2 (i,j) < d_2 (k,j) .\]
\end{mypro}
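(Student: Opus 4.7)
The plan is a standard two-direction biconditional argument, and both directions are short once one notes that preservation of the strict order $<$ is logically equivalent to preservation of the weak order $\leq$.

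For the forward direction, I would assume $d_1 \sim d_2$ and prove the $<$-preservation by considering two-element codes. Given $i,k,j \in [n]$ with $d_1(i,j) < d_1(k,j)$, choose the code $C=\{i,k\}$. Then $\argmin\{d_1(c,j):c\in C\} = \{i\}$, and decoding equivalence forces $\argmin\{d_2(c,j):c\in C\} = \{i\}$ as well, which means exactly $d_2(i,j) < d_2(k,j)$. Swapping the roles of $d_1$ and $d_2$ gives the reverse implication of the biconditional.

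For the converse, assume the strict-inequality condition and take an arbitrary code $C\subseteq [n]$ and $j\in [n]$. The observation to exploit is that $i \in \argmin\{d_1(c,j):c\in C\}$ is the same as saying $d_1(k,j) < d_1(i,j)$ fails for every $k\in C$. By the hypothesis applied contrapositively, this is equivalent to $d_2(k,j) < d_2(i,j)$ failing for every $k\in C$, i.e. $i \in \argmin\{d_2(c,j):c\in C\}$. Hence the two argmin sets coincide, which is the definition of $d_1 \sim d_2$.

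There is no real obstacle: the only subtle point is recognizing that one must reduce to codes of size two in the $\Rightarrow$ direction (one cannot directly read off a pointwise inequality from the hypothesis applied to large codes), and that passing between $<$ and $\leq$ via negation is what makes the $\Leftarrow$ direction handle ties correctly. Both moves are routine, so the proposition follows in a few lines.
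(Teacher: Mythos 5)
Your proof is correct and follows essentially the same route as the paper: the forward direction via two-element codes is identical, and your converse uses the same key comparison (transferring the strict inequality between a putative minimizer and a non-minimizer), merely phrased as a direct membership characterization rather than the paper's argument by contradiction. No gaps.
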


\begin{proof}
Let $d_1 \sim d_2$. Take $j \in [n]$ and suppose that there are $i,k \in [n]$ such that 
$ d_1 (i,j) < d_1 (k,j) $.
Consider the code $C =\{ i,k \}$. Then,
\[ \argmin\{d_1(c,j);c\in C \} = \{i \} .\]
Since $d_1 \sim d_2$, it follows that
\[ \argmin\{d_2(c,j);c\in C \} = \{i \} ,\]
and therefore, $d_2 (i,j) < d_2 (k,j)$.

In the case that that $d_2 (i,j) < d_2 (k,j)$, the analogous arguments would show that $d_1 (i,j) < d_1 (k,j)$.
Hence,
\[ d_1 (i,j) < d_1 (k,j) \Leftrightarrow d_2 (i,j) < d_2 (k,j) .\]

Now suppose that for every $i,k,j \in [n]$
\[ d_1 (i,j) < d_1 (k,j) \Leftrightarrow d_2 (i,j) < d_2 (k,j) .\] Let $C$ be a code and fix $j \in [n]$. Suppose that 
\[  \argmin\{d_1(c,j): c\in C \} \neq \argmin \{d_2(c,j): c\in C \} .\]
Without loss of generality, we can suppose that there exists a $k \in C$ such that
\[ k \in \argmin\{d_1(c,j): c\in C \}, \hspace{5pt} k \notin \argmin \{d_2(c,j): c\in C \}, \hspace{5pt} \text{and} |C| \geq 2 .\]

Let $i \in \argmin \{d_2(c,j): c\in C \}.$ Then, it holds that $d_2(i,j) < d_2 (k,j)$. 
But this implies, by equivalence, that $d_1(i,j) < d_1 (k,j)$,
and therefore, since $i \in C$,
\[ k \notin \argmin\{d_1(c,j): c\in C \} ,\]
a contradiction. Finalizing the proof.
\end{proof}

Proposition \ref{prop:ineq} implies directly that two equivalent distances will have the same set of balls, not necessarily for the same radiuses.

\begin{mycor}
Let $d_1$ and $d_2$ be two distances over $[n]$. Then, $d_1 \sim d_2$ if and only if for every $x_0 \in X$ and $r_1 \in \mathbb{R}$ there exists an $r_2=r_2(x_0,r_1) \in \mathbb{R}$ such that
$B_{d_1}(x_0,r_1) = B_{d_2}(x_0,r_2)$, where $B_d(x,r) = \{ y \in X : d(y,x) \leq r \}$.
\end{mycor}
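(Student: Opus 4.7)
The plan is to reduce both directions to Proposition~\ref{prop:ineq}, since that proposition already gives us the combinatorial handle (preservation of strict order from a fixed point) that a ball-level statement should amount to on a finite set.

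For the forward direction, I would assume $d_1 \sim d_2$ and, given $x_0 \in X$ and $r_1 \in \mathbb{R}$, construct the matching $r_2$ explicitly. Let $A = B_{d_1}(x_0, r_1)$. The key observation is that if $y \in A$ and $z \notin A$, then $d_1(y,x_0) \le r_1 < d_1(z,x_0)$, so in particular $d_1(y,x_0) < d_1(z,x_0)$; Proposition~\ref{prop:ineq} (applied with $j = x_0$, $i = y$, $k = z$) then yields $d_2(y,x_0) < d_2(z,x_0)$. Since $X = [n]$ is finite, I can set $r_2 := \max_{y \in A} d_2(y,x_0)$ (and take $r_2$ to be anything sufficiently large if $A = X$). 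Every $y \in A$ satisfies $d_2(y,x_0) \le r_2$, so $A \subseteq B_{d_2}(x_0, r_2)$; conversely any $w \notin A$ has $d_2(w,x_0) > d_2(y,x_0)$ for every $y \in A$, hence $d_2(w,x_0) > r_2$, so $B_{d_2}(x_0, r_2) \subseteq A$.

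For the converse, I would assume the balls match and verify the strict-order criterion of Proposition~\ref{prop:ineq}. Suppose $d_1(i,j) < d_1(k,j)$; take $x_0 = j$ and $r_1 = d_1(i,j)$, so that $i \in B_{d_1}(j, r_1)$ and $k \notin B_{d_1}(j,r_1)$. By hypothesis there is $r_2$ with $B_{d_1}(j,r_1) = B_{d_2}(j,r_2)$, and then $d_2(i,j) \le r_2 < d_2(k,j)$. The symmetric argument gives the reverse implication, and Proposition~\ref{prop:ineq} concludes $d_1 \sim d_2$.

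The main thing to be careful about is the subtlety that Proposition~\ref{prop:ineq} is stated only for strict inequalities, whereas balls are defined via weak inequalities. This is actually not a real obstacle, because strict-order equivalence automatically forces equality of distances to coincide (if neither $d_1(i,j) < d_1(k,j)$ nor the reverse holds, then neither strict inequality holds for $d_2$ either). The only other point to check is finiteness of $X$, which is what guarantees that the maximum defining $r_2$ is attained; without it one would need to take an infimum/supremum argument and possibly strict balls, but for our setting $X = [n]$ this is immediate.
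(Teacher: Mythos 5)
Your overall strategy --- reducing both directions to Proposition~\ref{prop:ineq} --- is exactly what the paper intends: the paper gives no explicit proof, asserting only that the corollary follows directly from that proposition, and your forward direction is a correct and complete version of that argument. (The only loose end there is the degenerate case $r_1<0$, where $A=\emptyset$ and one should take $r_2<0$ rather than a maximum over the empty set; you covered $A=X$ but not $A=\emptyset$.)

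The converse, however, has a real gap at the words ``the symmetric argument gives the reverse implication.'' The hypothesis, as literally quantified in the statement, only asserts that every $d_1$-ball is a $d_2$-ball with the same center; it does not assert that every $d_2$-ball is a $d_1$-ball. Your argument for $d_1(i,j)<d_1(k,j)\Rightarrow d_2(i,j)<d_2(k,j)$ uses a $d_1$-ball and is fine, but the ``symmetric'' step needs, for $r_2:=d_2(i,j)$, some $r_1$ with $B_{d_2}(j,r_2)=B_{d_1}(j,r_1)$, and the stated hypothesis supplies no such $r_1$. This is not a removable technicality: with the literal one-sided hypothesis the claimed equivalence is false. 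On $[3]$ take $d_1$ to be the discrete distance (all off-diagonal values equal to $1$) and $d_2$ with $d_2(1,2)=d_2(1,3)=1$ and $d_2(2,3)=2$. One checks for each of the three centers that every $d_1$-ball occurs among the $d_2$-balls with that center, yet for $C=\{1,2\}$ and $j=3$ the minimum distance decoders disagree ($\{1,2\}$ versus $\{1\}$), so $d_1\not\sim d_2$. Hence either the hypothesis must be read symmetrically --- for each center the \emph{family} of $d_1$-balls coincides with the family of $d_2$-balls, which is what the paper's phrase ``the same set of balls'' suggests --- or your converse must be repaired. Under that symmetric reading your proof is complete; you should say explicitly that this is the reading you are using, since it is the point on which the whole converse turns.
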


In terms of the matrix representation two distances are equivalent if the weak orderings of the elements of each column are the same. Since this is also true for channels under maximum likelihood decoding, we will define the following concept for general matrices.

\begin{mydef}
Given a matrix $M$, its decreasing column weak ordered matrix is the matrix $O^- M$ such that $(O^- M)_{ij}=k$ if $M_{ij}$ is the $k$th largest element (allowing ties) of the $j$th column.

Similarly, $(O^+ M)_{ij}=k$ if it is the $k$th smallest element of the $j$th column.
\end{mydef}

\begin{myexe}
If
$ M = 
\begin{pmatrix}
9 & 2 & 1 \\ 
9 & 7 & 0 \\ 
8 & 6 & 8
\end{pmatrix}
$,
then
\[O^- M = 
\begin{pmatrix}
1 & 3 & 2 \\ 
1 & 1 & 3 \\ 
2 & 2 & 1
\end{pmatrix}
\hspace{5mm}
\text{and}
\hspace{5mm}
O^+ M =
\begin{pmatrix}
2 & 1 & 2 \\ 
2 & 3 & 1 \\ 
1 & 2 & 3
\end{pmatrix}
\]

\end{myexe}

Note that $O^- M = O^- N$ if and only if $O^+ M = O^+ N$.

\begin{mycor} \label{pro:equiv}
Let $d_1$ and $d_2$ be two distances over $[n]$. Then, $d_1 \sim d_2$ if and only if $O^- d_1 = O^- d_2$ or equivalently, if $O^+ d_1 = O^+ d_2$.
\end{mycor}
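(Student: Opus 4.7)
The plan is to reduce the statement directly to Proposition \ref{prop:ineq} by unpacking what the matrices $O^-d$ and $O^+d$ encode. The key observation I would emphasize up front is that the $j$th column of $O^+d$ is nothing more than the (dense) rank vector of the $j$th column of $d$; explicitly, $(O^+d)_{ij}=1+\bigl|\{v : v \text{ is a distinct value appearing in column } j \text{ of } d,\ v<d(i,j)\}\bigr|$. Consequently $(O^+d)_{\cdot j}$ is determined by, and determines, the strict ordering of the entries of column $j$ of $d$. For the equivalence between the $O^-$ and $O^+$ versions of the statement I would simply cite the remark made just before the corollary, so that I only need to handle one of the two; I would work with $O^+$.

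For the forward direction, I would assume $d_1\sim d_2$ and fix $j\in[n]$. By Proposition \ref{prop:ineq}, for every $i,k\in[n]$ we have $d_1(i,j)<d_1(k,j)\Leftrightarrow d_2(i,j)<d_2(k,j)$, which says that the strict orderings of the $j$th columns of $d_1$ and $d_2$ agree. Since the dense rank described above depends only on this strict ordering, I conclude $(O^+d_1)_{ij}=(O^+d_2)_{ij}$ for every $i$, and hence $O^+d_1=O^+d_2$.

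For the reverse direction, I would assume $O^+d_1=O^+d_2$ and recover the hypothesis of Proposition \ref{prop:ineq}. Since the dense rank satisfies $d(i,j)<d(k,j)\Leftrightarrow (O^+d)_{ij}<(O^+d)_{kj}$, I would chain
\[
d_1(i,j)<d_1(k,j)\iff (O^+d_1)_{ij}<(O^+d_1)_{kj}\iff (O^+d_2)_{ij}<(O^+d_2)_{kj}\iff d_2(i,j)<d_2(k,j),
\]
so Proposition \ref{prop:ineq} yields $d_1\sim d_2$.

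There is no real obstacle here beyond bookkeeping; the one point that needs care is the tie-handling convention. The definition of $O^\pm$ uses dense rank (as the worked example with the repeated $9$'s in the first column makes explicit), which is exactly what makes the bijection between ``column strict orderings'' and ``column rank vectors'' clean. A different convention (e.g.\ standard competition ranking) would also work, but only because it is likewise a function of the strict ordering alone; flagging this is the only subtlety worth mentioning in the write-up.
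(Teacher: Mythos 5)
Your proposal is correct and follows exactly the route the paper intends: the corollary is stated without proof precisely because it is the observation that $O^\pm d$ records the column-wise weak ordering, which Proposition \ref{prop:ineq} shows is the invariant characterizing decoding equivalence. Your explicit dense-rank formulation and the two-way chain of equivalences just make that reduction precise, and your remark on the tie-handling convention is a fair (if minor) point of care.
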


The distances used in Coding Theory are usually metrics. However, every semimetric is decoding equivalent to some metric as shown by the following distance transform (a particular case of the "squeezing" argument in \cite{fire14}).

\begin{myexe} \label{exa:metric}
Let $d_1$ be a semimetric. Consider the distance transform
\[ d_2 (x,y)  = 1 + \frac{d_1 (x,y)}{\max_{u,v} d_1(u,v)} \]
for $x \neq y$ and zero otherwise. Then, $d_1 \sim d_2$ and $d_2$ is a metric.
\end{myexe}

Analogously to the case for distances, we have the following equivalence relation between channels.

\begin{mydef}
Two channels, $M$ and $N$, are called decoding equivalent, $M \sim N$, if for any code $C \subset X$, they define the same maximum likelihood decoder.
\end{mydef}

The maximum likelihood decoder and the minimum distance decoder are essentially the same. The only difference is that with the MLD we are searching for the largest entry in a column (restricted to the rows corresponding to codewords) while with the MDD we are looking for the smallest entry. In both cases, only the weak ordering of the elements in the columns are important. 

We first show that this is also the case for channels.

\begin{mypro}
Let $M$ and $N$ be two channels over $[n]$. Then, $M \sim N$ if and only if for every $i,k,j \in [n]$
\[ M_{i,j} < M_{k,j} \Leftrightarrow N_{i,j} < N_{k,j} .\]
\end{mypro}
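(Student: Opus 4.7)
The plan is to mirror the proof of Proposition \ref{prop:ineq} essentially verbatim, with the sole substantive change that MLD uses \texttt{argmax} instead of \texttt{argmin}, so the roles of ``strictly smaller'' and ``strictly larger'' are swapped but the logical structure is identical. The statement is in fact the natural probabilistic twin of Proposition \ref{prop:ineq}: equivalence of decoders is detected by the strict weak order within each column.

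For the forward direction, I assume $M \sim N$ and pick $i,k,j \in [n]$ with $M_{i,j} < M_{k,j}$. Taking the two-element code $C = \{i,k\}$, the MLD for $M$ at $j$ returns $\{k\}$ uniquely, so by $M \sim N$ the MLD for $N$ at $j$ also returns $\{k\}$, which forces $N_{i,j} < N_{k,j}$. The same argument with the roles of $M$ and $N$ exchanged yields the reverse implication, establishing the biconditional for every triple $(i,k,j)$.

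For the backward direction, I suppose the column-wise strict-inequality equivalence holds and, for contradiction, that there is a code $C$ and a $j \in [n]$ with $\argmax\{M_{c,j} : c \in C\} \neq \argmax\{N_{c,j} : c \in C\}$. Without loss of generality some $k \in C$ lies in the first set but not the second, and $|C| \geq 2$. Picking any $i \in \argmax\{N_{c,j} : c \in C\}$ gives $N_{k,j} < N_{i,j}$, and the hypothesis then yields $M_{k,j} < M_{i,j}$, contradicting $k \in \argmax\{M_{c,j} : c \in C\}$.

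There is no real obstacle here; the only thing to be careful about is that the hypothesis is phrased with strict inequalities (so ties are preserved automatically), and that one must invoke a code of size two to isolate a single pair $(i,k)$ in the forward direction. If anything, the mild subtlety is just remembering that the MLD is an \texttt{argmax} rather than an \texttt{argmin}, so the two-element code in the forward direction selects the row with the \emph{larger} column entry.
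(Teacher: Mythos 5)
Your proposal is correct and is exactly what the paper intends: its own proof consists of the single line that the argument is analogous to Proposition \ref{prop:ineq}, and you have carried out that analogy faithfully, with the only change being the swap from $\argmin$ to $\argmax$ (so the two-element code isolates the larger column entry). Nothing further is needed.
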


\begin{proof}
The proof is analogous to that of Proposition \ref{prop:ineq}.
\end{proof}

\begin{mycor}
Let $M$ and $N$ be two channels over $[n]$. Then, $M \sim N$ if and only if $O^- M = O^- N$ or equivalently $O^+ M = O^+ N$.
\end{mycor}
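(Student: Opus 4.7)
The plan is to derive the corollary directly from the preceding proposition, which already characterises $M \sim N$ by the condition
\[
M_{ij} < M_{kj} \Leftrightarrow N_{ij} < N_{kj} \quad \text{for all } i,k,j \in [n].
\]
So all that remains is to translate this pointwise inequality condition into the single equation $O^- M = O^- N$ (and similarly for $O^+$). Since this mirrors Corollary \ref{pro:equiv} verbatim, with channels in place of distances, no new idea is required beyond unpacking the definition of $O^-$.

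First, I would recall that $(O^- M)_{ij}$ is the rank of $M_{ij}$ within the $j$th column, under the weak (ties-allowed) decreasing ordering. Hence the $j$th column of $O^- M$ is entirely determined by the family of strict inequalities $\{M_{ij} < M_{kj}\}_{i,k}$ in that column: two entries get the same rank exactly when neither is strictly smaller than the other, and $M_{ij}$ receives a strictly smaller rank (i.e.\ is ``higher'') than $M_{kj}$ exactly when $M_{kj} < M_{ij}$. Therefore $O^- M = O^- N$ if and only if, column by column, the relations $M_{ij} < M_{kj}$ and $N_{ij} < N_{kj}$ coincide, which is precisely the proposition's characterisation of $M \sim N$.

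For the ``equivalently'' part, I would invoke the remark made just after the definition of $O^\pm$, namely that $O^- M = O^- N \iff O^+ M = O^+ N$: reversing the direction of the weak order is an involution on rankings, so the decreasing-rank data and increasing-rank data carry the same information. (If one wanted a direct argument, one could note that $O^+$ is obtained from $O^-$ by the column-wise substitution $k \mapsto (r_j + 1 - k)$, where $r_j$ is the number of distinct weak-order classes in column $j$.)

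I do not expect a real obstacle; the result is essentially a restatement. The only subtle point to be careful about is that $O^-$ is defined using weak orderings with ties, so the equivalence must be with \emph{strict} inequalities in the preceding proposition, which is indeed how it is stated. Once this is observed, the corollary is immediate.
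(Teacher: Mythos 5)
Your proposal is correct and follows exactly the route the paper intends: the corollary is left unproved there precisely because it is an immediate translation of the preceding proposition's strict-inequality characterisation into the definition of $O^-$ (and the $O^+$ equivalence is the remark already made after the definition). Nothing is missing.
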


Thus, determining if a channel $M$ is matched to a distance $d$ is equivalent to determining if there exists a distance $d$ such that $O^- M = O^+ d$.

This is equivalent to solving the following system of inequalities on $d(i,j)$:
\[ \left\{\begin{matrix}
d(i,j) = d(k,j) & \text{if } O^- M_{ij}= O^- M_{kj} \\ 
d(i,j) < d(k,j) & \text{if } O^- M_{ij}< O^- M_{kj}
\end{matrix}\right.
\]
and the symmetry conditions $d(i,j) = d(j,i)$.

Since it is a distance we also need that $d(i,i) = 0$, i.e. $O^- M_{ii}= 1$. To find a metric we we must also require that the distance be a semimetric so that we can apply a transformation such as in Example \ref{exa:metric}. For this to happen only the diagonals of $O^- M_{ii}$ can have ones.

We present the following algorithm which given a $n \times n$ matrix $M$ determines if there exists (and in this case constructs) a distance $d$ on $[n]$ such that $O^- M = O^+ d$ or if no such distance exists (and in this case outputs an inequality showing why this is not possible).

\begin{myalg}
We have $n$ chains of inequalities, each corresponding to a certain column. The smallest element of each chain must be $d(i,i)$ which we set to equal zero. Then:

\begin{enumerate}
\item We take the first chain and set arbitrary values for the distances with the condition that the inequalities hold true; 
\item We then set the same values to their corresponding symmetric term, i.e. $d(i,j)=d(j,i)$; 
\item We continue to do this until we have assigned a value to every distance and have therefore found a distance matched to our channel; \\
or,
\end{enumerate}
\begin{enumerate}[label=\arabic*'.,ref=(\arabic*'),start=3]
\item Find that some distance cannot have a valued assigned to it, and thus there is no distance matched for this channel. 
\end{enumerate}
\end{myalg}

\begin{proof}
The only way for the algorithm not to work is if the choices made in step 1 could influence our result. This is not possible because of Proposition \ref{prop:ineq}, where it is shown that only the weak ordering of the values matter.
\end{proof}

We illustrate the use of the algorithms in the following two examples.

\begin{myexe}
Let 
$ M =
\begin{pmatrix}
\frac{5}{8} & \frac{1}{8} & \frac{2}{8} \\[1ex]
\frac{2}{8} & \frac{5}{8} & \frac{1}{8} \\[1ex]
\frac{1}{8} & \frac{2}{8} & \frac{5}{8}
\end{pmatrix}
$.
Then,
$O^- M =
\begin{pmatrix}
1 & 3 & 2 \\ 
2 & 1 & 3 \\ 
3 & 2 & 1
\end{pmatrix}
.$

We have the following three chains of inequalities (corresponding to the columns):
\[ 0=d(1,1) < d(1,2) < d(1,3)\]
\[ 0=d(2,2) < d(2,3) < d(2,1)\]
\[ 0=d(3,3) < d(3,1) < d(3,2)\]
We set arbitrary values to the elements in the first column and the same to their symmetric counterparts (since a distance is symmetric).
\[ 0=d(1,1) < 1=d(1,2) < 2=d(1,3)\]
\[ 0=d(2,2) < d(2,3) < 1=d(2,1)\]
\[ 0=d(3,3) < 2=d(3,1) < d(3,2)\]
In the next step we must set an arbitrary value to $d(2,3)$ but it is impossible to do this since it must satisfy
\[ 2=d(3,1) <d(2,3) < 1=d(2,1).\]
Therefore, $M$ is not metriziable since the following is not possible
\[ d(2,1) < d(3,1) <d(2,3) < d(2,1) .\]
\end{myexe}

\begin{myexe}
Let 
$ M =
\begin{pmatrix}
\frac{5}{8} & \frac{3}{16} & \frac{3}{16} \\[1ex] 
\frac{1}{4} & \frac{1}{2} & \frac{1}{4} \\[1ex] 
\frac{1}{8} & \frac{2}{8} & \frac{5}{8}
\end{pmatrix}
$.
Then,
$O^- M =
\begin{pmatrix}
1 & 3 & 3 \\ 
2 & 1 & 2 \\ 
3 & 2 & 1
\end{pmatrix}
$.

We have the following three chains of inequalities:
\[ 0=d(1,1) < d(1,2) < d(1,3)\]
\[ 0=d(2,2) < d(2,3) < d(2,1)\]
\[ 0=d(3,3) < d(3,2) < d(3,1)\]
We set arbitrary values to the first chain and to their symmetric counterparts.
\[ 0=d(1,1) < 1=d(1,2) < 2=d(1,3)\]
\[ 0=d(2,2) < d(2,3) < 1=d(2,1)\]
\[ 0=d(3,3) < d(3,2) < d(3,1)=2\]
We do the same for the second chain.
\[ 0=d(1,1) < 1=d(1,2) < 2=d(1,3)\]
\[ 0=d(2,2) < d(2,3)=\frac{1}{2} < 1=d(2,1)\]
\[ 0=d(3,3) < d(3,2)=\frac{1}{2} < d(3,1)=2\]
We were able to set values to all the distances. Therefore, $M$ is matched to the following distance:
\[d =
\begin{pmatrix}
0 & 1 & 2 \\ 
1 & 0 & \frac{1}{2} \\ 
2 & \frac{1}{2} & 0
\end{pmatrix}
.\]

Note that this distance is not a metric since
\[ 2 = d(1,3) > d(1,2) + d(2,3) = 1 + \frac{1}{2} .\]
Since $O^- M$ has ones only in the diagonal we can apply the transformation of Example \ref{exa:metric} to get a metric matched to the channel. In this case the metric
\[d_2 =
\begin{pmatrix}
0 & \frac{3}{2} & 2 \\ 
\frac{3}{2} & 0 & \frac{5}{4} \\ 
2 & \frac{5}{4} & 0
\end{pmatrix}
\]
is matched to the channel $M$.

\end{myexe}

In terms of algorithmic complexity, if our matrix is $n\times n$ we have at most $n^2$ elements to set and for each of these we make at most $2n$ comparisons (each chain of inequalities has size $n$), giving a trivial upper bound of $O(n^3)$ on the number of operations to be performed.

In application the size of the matrix is usually exponential.

\section{Set Patterns} \label{sec:setpatt}

In this section we present Theorem \ref{teo:sys}, the main tool for proving our main embedding theorems, Theorems \ref{teo:translation embedding} and \ref{teo:embedding}. Their proof consists in reducing the problem into determining if there exists sets satisfying certain properties.

In \cite{deza73}, it is shown that isometrically embedding a distance into the Hamming cube is equivalent to solving a problem of the following type\footnote{Many combinatorial problems can be stated in this form (see the references in \cite{chva80}).}: given an $n \times n$ matrix $A=(a_{ij})$, decide wether there are exists sets $S_1, S_2, \ldots , S_n$ such that $| S_i \cap S_j | = a_{ij}$ for every $i,j \in [n]$.

The matrix $A$ is known as an \emph{intersection pattern}, and when such sets exist, the intersection pattern is said to be \emph{realizable}.

Determining if an intersection problem is realizable is $NP$-complete \cite{chva80} and, therefore, so is determining if a distance is isometrically embeddable into the Hamming cube.

To prove our results we will need to generalize on the notion of intersection patterns by defining what we call \emph{set patterns}.

We tacitly assume that all subsets $I$ considered in the sequel are nonempty.
 
\begin{mydef}
Given a finite family of sets $\mathcal{F} = \{ A_1, A_2, \ldots , A_n \}$ a \emph{minterm}\footnote{This term is taken from Boolean algebra.} is a set  $X_I = \{ a \in \cup_{i=1}^n A_i : i \in I \Rightarrow a \in A_i, i \notin I \Rightarrow a \notin A_i, \forall I \subseteq [n] \}$, for every subset $I \subseteq [n]$. The cardinalities of the minterms are denoted by lowercase letters $x_I = | X_I |$.
\end{mydef}

The minterms are the disjoint components of the Venn diagram of the sets.

\begin{figure}
  \centering
      \includegraphics[width=5cm]{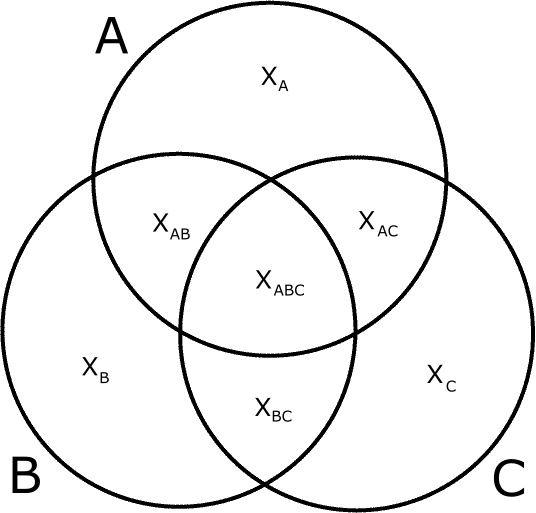}
  \caption{\label{fig:venn} Minterms of the family $\mathcal{F} = \{A,B,C \}$.}
\end{figure}

We now present the main notion of this section.

\begin{mydef}
A \emph{set pattern} is a triple $(G,c,n)$ where $n$ is a positive integer, $G : \mathbb{R}^{2^n - 1} \rightarrow \mathbb{R}^m$, and $c \in \mathbb{R}^m$. An $x \in \mathbb{R}^{2^n - 1}$ is called a \emph{solution} of the\footnote{We will use subsets as indexes and leave unspecified, but assume as given, the bijection from $[2^{n} - 1]$ to $2^{[n]} - \emptyset$. } set pattern if $G(x) = c$.

If there exists a finite family of sets $\mathcal{F} = \{ A_1, A_2, \ldots , A_n \}$ such that the cardinalities of the minterms are a solution of the set pattern, we say that $\mathcal{F}$ is a \emph{realization} of the set pattern and that the set pattern is \emph{realizable}.
\end{mydef}

A set pattern is, essentially, a system of equations for which integer solutions correspond to a family of sets satisfying the pattern imposed by the equations.

\begin{mypro} \label{pro:realizable}
A set pattern $(G,c,n)$ is realizable if and only if there exists $x \in  \mathbb{Z}^{2^n - 1}_{\geq 0}$ such that $G ( x ) = c$.
\end{mypro}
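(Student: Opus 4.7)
The statement is an equivalence, so I would prove each direction separately, and both are essentially direct from the definitions once the right construction is identified.

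For the forward direction (``only if''), suppose $(G,c,n)$ is realizable. Then by definition there exists a finite family $\mathcal{F} = \{A_1,\ldots,A_n\}$ whose minterm cardinalities $(x_I)_{\emptyset \neq I \subseteq [n]}$ satisfy $G(x) = c$. Since each $x_I = |X_I|$ is the cardinality of a (finite) set, we have $x_I \in \mathbb{Z}_{\geq 0}$, and therefore $x \in \mathbb{Z}^{2^n - 1}_{\geq 0}$. This direction requires no real work beyond unwinding the definition.

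For the reverse direction (``if''), the plan is to construct a family $\mathcal{F}$ realizing the pattern directly from the integer vector $x$. Given $x \in \mathbb{Z}^{2^n-1}_{\geq 0}$ with $G(x)=c$, for each nonempty $I \subseteq [n]$ introduce a set $X_I$ of cardinality exactly $x_I$, chosen so that the collection $\{X_I : \emptyset \neq I \subseteq [n]\}$ is pairwise disjoint (concretely, take $X_I = \{I\} \times \{1,\ldots,x_I\}$, which is the empty set when $x_I = 0$). Then define
\[
A_i := \bigcup_{\substack{I \subseteq [n] \\ i \in I}} X_I \qquad \text{for each } i \in [n].
\]
The verification reduces to checking that the minterms of the family $\mathcal{F} = \{A_1,\ldots,A_n\}$ are precisely the $X_I$. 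Indeed, an element $a \in X_I$ belongs to $A_i$ iff $i \in I$ by construction (here disjointness of the $X_J$'s is used to exclude $a$ from any $X_J$ with $J \neq I$, so $a \notin A_i$ for $i \notin I$). Hence the minterm indexed by $I$ for $\mathcal{F}$ is exactly $X_I$, so its cardinality equals $x_I$. Therefore the minterm cardinality vector of $\mathcal{F}$ is $x$, and $G(x) = c$ by hypothesis, so $\mathcal{F}$ realizes the set pattern.

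There is no real obstacle here; the only subtlety is to make sure the construction is well-defined when some $x_I = 0$ (handled by allowing $X_I = \emptyset$) and to check the minterm identification carefully using pairwise disjointness of the chosen $X_I$'s. The proposition is essentially a restatement of the definition of realizability in terms of integer solutions, with the abstract ``marker'' construction above doing the only substantive work.
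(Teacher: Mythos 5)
Your proof is correct and takes the same route the paper does: the paper simply asserts the proposition ``follows directly from the definition,'' and your two directions (reading off nonnegative integer cardinalities, and the disjoint-marker construction $X_I = \{I\}\times\{1,\dots,x_I\}$ with $A_i = \bigcup_{I \ni i} X_I$) are exactly the details being left implicit there.
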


\begin{proof}
It follows directly from the definition.
\end{proof}

\begin{myexe}
Does there exist sets $A_1, A_2, A_3$ such that $|A_1 \bigtriangleup A_2 | = 3$, $|A_3|^{|A_1|} = 27$ and $|A_1 \cap A_2|^2 = 9$?

This problem is equivalent to the realizability of the following set pattern:
\[
\left\{\begin{matrix*}[l]
G_1 (x) = x_1 + x_2 + x_{13} + x_{23} & = 3 & = c_1 \\ 
G_2 (x) =(x_3 + x_{13} + x_{23} + x_{123})^{x_1+x_{12} + x_{13} + x_{123}} & = 27 & = c_2 \\ 
G_3 (x) =(x_{12} + x_{123})^2 & = 9 & = c_3
\end{matrix*}\right.
\]

Since $(x_1, x_2, x_3, x_{12}, x_{13}, x_{23}, x_{123}) = (2,0,1,2,0,1,1) \in \mathbb{Z}^{7}_{\geq 0}$ is a solution, it follows from Proposition \ref{pro:realizable} that such sets exist.
\end{myexe}

We are only interested in set patterns which correspond to intersections and symmetric differences. We must generalize these functions to $\mathbb{R}^{2^n - 1}$.

We start by generalizing set intersections.

\begin{mydef}\label{def:inter}
Let $J \subseteq [n]$. The \emph{$J$-wise intersection function} is defined as  $\Cap_J : \mathbb{R}^{2^n - 1} \rightarrow \mathbb{R}$ such that

\[ \Cap_J (x) = \sum_{J \subseteq I \subseteq [n]} x_I    .\]

The $J$-wise intersection functions, where $|J| \leq k$, is denoted by $\bm{\Cap_k} = ( \Cap_{J \subseteq [n]} )_{ |J| \leq k}$.
\end{mydef}

It is easy to see that $J$-wise intersections are linear and that, moreover, $\bm{\Cap_n}$ is a linear automorphism\footnote{A linear transformation from $\mathbb{R}^{2^n - 1}$ to itself.}, and thus, has a unique solution. 

\begin{myexe}
Does there exist sets $A_1, A_2, A_3$ such that

\[
\begin{matrix}
|A_1| = 6 \hspace{5mm}& |A_1 \cap A_2| = 6 \hspace{5mm}& |A_1 \cap A_2 \cap A_3| = 4 \hspace{5mm}\\ 
|A_2| = 9 \hspace{5mm}& |A_1 \cap A_3| = 5 \hspace{5mm}& \hspace{5mm}\\ 
|A_3| = 8 \hspace{5mm}& |A_2 \cap A_3| = 7 \hspace{5mm}& \hspace{5mm}
\end{matrix}
?\]

This problem is equivalent to the realizability of the following set pattern:
\[
\begin{matrix}
\Cap_{1}(x) = 6 \hspace{5mm}& \Cap_{12}(x) = 6 \hspace{5mm}& \Cap_{123}(x) = 4 \hspace{5mm}\\ 
\Cap_{2}(x) = 9 \hspace{5mm}& \Cap_{13}(x) = 5 \hspace{5mm}& \hspace{5mm}\\ 
\Cap_{3}(x) = 8 \hspace{5mm}& \Cap_{23}(x) = 7 \hspace{5mm}& \hspace{5mm}
\end{matrix}
\]
or equivalently, $\bm{\Cap_n} (x) = (6,9,8,6,5,7,4)$. The unique solution is $ x = (-1,0,0,2,1,3,4) \notin \mathbb{Z}^{7}_{\geq 0}$ and thus, by Proposition \ref{pro:realizable}, no such sets exist.

\end{myexe}

Intersection patterns are a particular case of set patterns of the form $(\bm{\Cap_2}, c, n)$.

Given a family $\mathcal{F} = \{ A_1, A_2, \ldots , A_n \}$, we have the following known relation

\begin{equation} \label{eq:dif}
 |\triangle_{i \in I} A_i| = \sum_{l=1}^{|I|} (-2)^{l-1} \sum_{ \substack{K \subseteq I \\ |K| = l}} |\cap_J A_{i \in J}| .
\end{equation}

We use equation \ref{eq:dif} to generalize symmetric differences.

\begin{mydef} \label{def:symm}
Let $J \subseteq [n]$. The \emph{$J$-wise symmetric difference function} is defined as the function $\blacktriangle_J : \mathbb{R}^{2^n - 1} \rightarrow \mathbb{R}$ such that

\[ \blacktriangle_J (x) = \sum_{l=1}^{|J|} (-2)^{l-1} \sum_{\substack{K \subseteq I \\ |K| = l}} \Cap_K (x) \]

The $J$-wise symmetric difference functions, where $|J| \leq k$, is denoted by $\bm{\blacktriangle_k} = (\blacktriangle_{J \subseteq [n]} )_{ |J| \leq k}$.
\end{mydef}

It follows directly from the linearity of $J$-wise intersections that $J$-wise symmetric differences are linear and that $\bm{\blacktriangle_n}$ is a linear automorphism.

\begin{myexe} \label{exe:sym}
Does there exist sets $A_1, A_2, A_3$ such that

\[
\begin{matrix}
|A_1| = 3 \hspace{5mm}& |A_1 \bigtriangleup A_2 | = 3 \hspace{5mm}& |A_1 \bigtriangleup A_2 \bigtriangleup A_3 | = 3 \\ 
|A_2| = 2 \hspace{5mm}& |A_1 \bigtriangleup A_3 | = 3 \hspace{5mm}& \\ 
|A_3| = 1 \hspace{5mm}& |A_2 \bigtriangleup A_3 | = 2 \hspace{5mm}& 
\end{matrix}
?\]

This problem is equivalent to the realizability of $\bm{\blacktriangle_n} (x) = (3,2,1,3,3,2,3)$.

By Definition \ref{def:symm}, we can recursively calculate the $J$-wise intersections and show that the problem is equivalent to the realizability of $\bm{\Cap_n} (x) = (3,2,1,1,\frac{1}{2},\frac{1}{2},\frac{1}{4})$, which has as a unique solution $x = \frac{1}{4} (7,3,1,3,1,1,1) \notin \mathbb{Z}^{7}_{\geq 0}$. 

Thus, no such sets exist.
\end{myexe}

We denote by $\bm{\overrightarrow{1}}$ the vector $(1,1,\ldots , 1)$ with all entries equal to one.

The following Lemma will be essential for proving Theorem \ref{teo:sys}.

\begin{mylem} \label{lem:novo} It holds that
\[\bm{\blacktriangle_n} ( \bm{\overrightarrow{1}} ) = 2^{n-1} \bm{\overrightarrow{1}} .\]
\end{mylem}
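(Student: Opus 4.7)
The plan is to evaluate $\blacktriangle_J(\bm{\overrightarrow{1}})$ for an arbitrary nonempty $J \subseteq [n]$ and show that the value is $2^{n-1}$, independent of $J$. Since $\bm{\blacktriangle_n}$ is indexed by all nonempty subsets of $[n]$, this will give exactly the claimed identity.

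First I would compute the intersection functions on the all-ones vector. Directly from Definition~\ref{def:inter},
\[ \Cap_K(\bm{\overrightarrow{1}}) = \sum_{K \subseteq I \subseteq [n]} 1 = 2^{n-|K|}, \]
since the sets $I$ with $K \subseteq I \subseteq [n]$ are in bijection with the subsets of $[n] \setminus K$. Plugging this into Definition~\ref{def:symm} and using the fact that the number of $K \subseteq J$ of size $l$ is $\binom{|J|}{l}$, I would write (with $j = |J|$)
\[ \blacktriangle_J(\bm{\overrightarrow{1}}) = \sum_{l=1}^{j} (-2)^{l-1} \binom{j}{l} \, 2^{n-l}. \]

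The key algebraic observation is that the power of $2$ factorises cleanly: $(-2)^{l-1} \cdot 2^{n-l} = (-1)^{l-1} \, 2^{n-1}$, so the factor $2^{n-1}$ comes out of the sum and leaves
\[ \blacktriangle_J(\bm{\overrightarrow{1}}) = 2^{n-1} \sum_{l=1}^{j} (-1)^{l-1} \binom{j}{l}. \]
From the binomial identity $\sum_{l=0}^{j} (-1)^{l} \binom{j}{l} = 0$, valid for $j \geq 1$, the tail sum $\sum_{l=1}^{j} (-1)^{l-1} \binom{j}{l}$ equals $1$. Hence $\blacktriangle_J(\bm{\overrightarrow{1}}) = 2^{n-1}$ for every nonempty $J$, and assembling these entries yields $\bm{\blacktriangle_n}(\bm{\overrightarrow{1}}) = 2^{n-1} \bm{\overrightarrow{1}}$.

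There is no real obstacle: the argument is a short computation, and the only thing to be careful about is the inner combinatorial identity and the cancellation of powers of $2$. Conceptually, the statement says that on a universe realised by a single element lying in every minterm (so every $A_i$ has the same structure), each iterated symmetric difference has cardinality $2^{n-1}$, which matches the well-known fact that the symmetric difference of $n$ copies of a point-parity configuration behaves like a parity function on $\{0,1\}^n$.
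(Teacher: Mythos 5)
Your proof is correct and follows essentially the same route as the paper: compute $\Cap_K(\bm{\overrightarrow{1}}) = 2^{n-|K|}$, substitute into the definition of $\blacktriangle_J$, factor out $2^{n-1}$, and finish with the alternating binomial identity. The only difference is that you spell out the justifications (the bijection for counting supersets of $K$, and deriving the tail-sum identity from $\sum_{l=0}^{j}(-1)^l\binom{j}{l}=0$) slightly more explicitly than the paper does.
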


\begin{proof}
We start by calculating $ \Cap_J (\bm{\overrightarrow{1}}) = \sum_{I \in [n]} [J \subseteq I] = 2^{n-|J|}.$
Thus,
\begin{align*} 
\blacktriangle_J ( \bm{\overrightarrow{1}} ) &=  \sum_{l=1}^{|I|} (-2)^{l-1} \sum_{\substack{K \subseteq I \\ |K| = l}} \Cap_K (\bm{\overrightarrow{1}}) 
=  \sum_{l=1}^{|I|} (-2)^{l-1} \sum_{\substack{K \subseteq I \\ |K| = l}} 2^{n-|K|} \\ 
&=  \sum_{l=1}^{|I|} (-2)^{l-1} \binom{|I|}{l} 2^{n-l}  
=  2^{n-1} \sum_{l=1}^{|I|} \binom{|I|}{l} (-1)^{l-1}   \\ 
&= 2^{n-1},\\
\end{align*}
where in the last equality we use the identity $\sum_{i=1}^k \binom{k}{i} (-1)^{i-1} = 1$.

\end{proof}

We are now ready to prove the main result of this section.

\begin{myteo} \label{teo:sys}

Given $c \in \mathbb{Q}^{2^n-1}_+$, there exists  positive integers $m$ and $k$ such that the set pattern $(\bm{\blacktriangle_n},(m c + \bm{\overrightarrow{k}}),n)$ is realizable.

\end{myteo}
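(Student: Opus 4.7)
The plan is to reduce realizability, via Proposition \ref{pro:realizable}, to producing $x \in \mathbb{Z}_{\geq 0}^{2^n-1}$ satisfying $\bm{\blacktriangle_n}(x) = mc + \bm{\overrightarrow{k}}$, and then exploit the fact that $\bm{\blacktriangle_n}$ is a linear automorphism together with Lemma \ref{lem:novo} to write down the candidate $x$ explicitly.

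Concretely, since $\bm{\blacktriangle_n}$ is invertible, any solution must satisfy
\[ x = m\,\bm{\blacktriangle_n}^{-1}(c) + k\,\bm{\blacktriangle_n}^{-1}(\bm{\overrightarrow{1}}). \]
Lemma \ref{lem:novo} gives $\bm{\blacktriangle_n}^{-1}(\bm{\overrightarrow{1}}) = 2^{-(n-1)} \bm{\overrightarrow{1}}$, so the candidate becomes
\[ x = m\, y + \frac{k}{2^{n-1}}\, \bm{\overrightarrow{1}}, \qquad \text{where } y := \bm{\blacktriangle_n}^{-1}(c) \in \mathbb{Q}^{2^n - 1}. \]
It remains to choose the free parameters $m,k$ so that $x$ lies in $\mathbb{Z}_{\geq 0}^{2^n-1}$.

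First I would pick $m$ to be any positive common multiple of the denominators of the coordinates of $y$, so that $m y \in \mathbb{Z}^{2^n-1}$. Then I would pick $k$ to be a positive multiple of $2^{n-1}$ — so that the additive shift $k/2^{n-1}$ is a positive integer — and additionally large enough that $k/2^{n-1} \geq -\min_I (m y)_I$; this is possible because $\bm{\overrightarrow{1}}$ has strictly positive entries, so shifting by a sufficiently large multiple of it compensates for any negative coordinates of $m y$. With these choices every coordinate of $x$ is a nonnegative integer, so Proposition \ref{pro:realizable} produces the desired family of sets realizing $(\bm{\blacktriangle_n}, mc + \bm{\overrightarrow{k}}, n)$.

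No step looks like a serious obstacle: invertibility of $\bm{\blacktriangle_n}$ was already noted after Definition \ref{def:symm}, Lemma \ref{lem:novo} supplies the explicit preimage of $\bm{\overrightarrow{1}}$, and the remaining argument is the routine ``clear denominators, then shift by a large positive vector'' trick. The only point that deserves a sentence of care in the write-up is checking that the two conditions on $k$ (being a multiple of $2^{n-1}$ and being large enough to force positivity) are simultaneously achievable, which is immediate since the multiples of $2^{n-1}$ are unbounded above.
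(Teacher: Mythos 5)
Your proposal is correct and follows essentially the same route as the paper: solve $\bm{\blacktriangle_n}(x)=c$ using invertibility, note the solution is rational, clear denominators with $m$, and use Lemma \ref{lem:novo} to shift by a multiple of $\bm{\overrightarrow{1}}$ so all coordinates become nonnegative integers. Your explicit remark that $k$ can always be taken to be a \emph{positive} multiple of $2^{n-1}$ is in fact slightly more careful than the paper's choice $k = 2^{n-1}\lvert\min_I m x_I\rvert$, which could vanish.
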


\begin{proof}
Let $x$ be the solution of $(\bm{\blacktriangle_n},c,n)$. By definitions \ref{def:inter} and \ref{def:symm}, the rationality of the $x_I$ follows from that of the $c_I$.

Let $m$ be the least common multiple of the divisors of all the $x_I$, $r=|\min_{I \subseteq [n]} \{ m x_I \} |$, and $k= r 2^{n-1}$. Then, for every $I \subseteq [n]$, by definition of $m$, we have that $mx_I +r$ is an integer and that it is non-negative, since 
\[ \min_{I \subseteq [n]} \{ m x_I + r \} = \min_{I \subseteq [n]} \{ m x_I \} + r  \geq 0 .\]
By linearity and Lemma \ref{lem:novo}, it follows that 
\begin{align*} 
\bm{\blacktriangle_n} (mx+\bm{\overrightarrow{r}}) &=  m \bm{\blacktriangle_n} (x) + \bm{\blacktriangle_n} (\bm{\overrightarrow{r}}) \\
&=  m c + 2^{n-1} \bm{\overrightarrow{r}} \\ 
&=  m c + \bm{\overrightarrow{k}}.
\end{align*}
Thus, by Proposition \ref{pro:realizable}, $(\bm{\blacktriangle_n},(m c + \bm{\overrightarrow{k}}),n)$ is realizable.
\end{proof}

Note that $m c + \bm{\overrightarrow{k}}$ has the same weak ordering as $c$. This is important since, as seen in Proposition \ref{prop:ineq}, two distances are decoding equivalent if they have the same weak ordering. It is in this way that we will use Theorem \ref{teo:sys} to prove Theorems \ref{teo:translation embedding} and \ref{teo:embedding}.

\begin{myexe} \label{exe:teo1}
Lets apply Theorem \ref{teo:sys} to Example \ref{exe:sym}.

We saw that $\bm{\blacktriangle_n} (x) = (3,2,1,3,3,2,3)$ has unique solution
$x = \frac{1}{4} (7,3,1,3,1,1,1)$. From the proof of Theorem \ref{teo:sys} it follows that if we take $m = 2$ and $k = \frac{1}{4}$, then $\bm{\blacktriangle_n} (x) = 2(3,2,1,3,3,2,3)+ \frac{1}{4} \bm{\overrightarrow{1}} $ is realizable. Indeed one can calculate that its solution is $x = (4,2,1,2,1,1,1)$.

Both $\frac{1}{4}(7,3,1,3,1,1,1)$ and $(4,2,1,2,1,1,1)$ have the same weak ordering.
\end{myexe}

\section{Embedding Distances into the Hamming Cube}\label{emb}

Embedding distances isometrically into the Hamming cube is an area of its own \cite{deza97}. Determining if it is possible for a given distance is NP-Hard \cite{chva80}. 

We prove that any semimetric is decoding equivalent to a distance which is isometrically embeddable into the Hamming cube. If, in addition, the semimetric is translation invariant over $\mathbb{F}_2^n$, the embedding is a linear function.

We first note that there is a weight preserving bijection between the $n$-dimensional Hamming cube $H^n$, and the subsets of $[n]$, $2^{[n]}$ given by
\[ supp : H^n \rightarrow 2^{[n]} \]
where $supp(x) = \{ i : x_i \neq 0 \}$.

This function satisfies the following properties:
\begin{enumerate}

\item $supp(x+y) = supp(x) \bigtriangleup supp(y)$
\item $\omega_H (x) = | supp(x) |$.

\end{enumerate}

Thus, isometrically embedding a distance $d$ over $[n]$ into the Hamming cube is equivalent to determining if $(\bm{\blacktriangle_{2}},\delta,n-1)$ is realizable, where 
\[ \delta_{ij} = d(i,j), \hspace{10pt} i,j \in [n-1] \]
\[ \delta_i = d(i,n), \hspace{10pt} i \in [n-1] .\]

By Definition \ref{def:symm} this corresponds to the intersection pattern $(\bm{\Cap_{2}},c,n-1)$:
\[ c_{ij} = \frac{1}{2} ( d(i,n) + d(j,n) - d(i,j) ), \hspace{10pt} i,j \in [n-1] \]
\[ c_i = d(i,n), \hspace{10pt} i,j \in [n-1] .\]

This relation between intersection patterns and Hamming cube embeddings was first pointed out by Deza in \cite{deza73}.

We are interested in embedding up to decoding equivalence. We will first consider the case of translation invariant semimetrics over $\mathbb{F}_2^n$. This will follow directly from Theorem \ref{teo:sys}. Since any semimetric can be seen as translation invariant by adding dummy variables, the general case will follow as a consequence. 

As said earlier, we always assume $I \subseteq [n]$ to be nonempty.

\begin{myteo} \label{teo:translation embedding}
Let $d_1$ be a translation invariant semimetric over $\mathbb{F}_2^n$ with weight $\omega_1$. Then there exists a translation invariant semimetric $d_2$, with weight $\omega_2$ which is decoding equivalent to $d_1$ and is linearly embeddable into the Hamming cube.
\end{myteo}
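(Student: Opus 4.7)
The strategy is to reduce the statement to an application of Theorem \ref{teo:sys}. A translation invariant weight $\omega$ on $\mathbb{F}_2^n$ is fully encoded by the vector $c \in \mathbb{R}^{2^n-1}_{\geq 0}$ with entries $c_I = \omega(\chi_I)$, where $\chi_I$ denotes the characteristic vector of the nonempty subset $I \subseteq [n]$. A linear map $\phi : \mathbb{F}_2^n \to \mathbb{F}_2^N$ is determined by the sets $A_i = supp(\phi(e_i))$; since $supp$ carries addition to symmetric difference, one has $supp(\phi(\chi_I)) = \triangle_{i \in I} A_i$, and hence $\omega_H(\phi(\chi_I)) = |\triangle_{i \in I} A_i|$. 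Linear isometric embeddability of $\omega$ into some Hamming cube is therefore exactly the realizability of the set pattern $(\bm{\blacktriangle_n}, c, n)$.

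Since decoding equivalence only sees the weak ordering of the weight values, I first replace $\omega_1$ by a decoding equivalent weight taking strictly positive rational values on $\mathbb{F}_2^n \setminus \{0\}$ (for instance, replacing its distinct nonzero values $v_1 < \cdots < v_s$ by $1 < \cdots < s$), so that the associated $c$ lies in $\mathbb{Q}^{2^n-1}_+$. Theorem \ref{teo:sys} then provides positive integers $m, k$ and a family $A_1, \ldots, A_n \subseteq [N]$ realizing $(\bm{\blacktriangle_n}, mc + \bm{\overrightarrow{k}}, n)$. I define $\omega_2(0) = 0$ and $\omega_2(\chi_I) = m\omega_1(\chi_I) + k$ for nonempty $I$, set $d_2(x,y) = \omega_2(x+y)$, and take $\phi$ to be the $\mathbb{F}_2$-linear extension of $e_i \mapsto \chi_{A_i}$; by construction $\omega_H \circ \phi = \omega_2$, so $\phi$ gives the desired linear embedding.

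The remaining verifications are routine. Symmetry and $d_2(x,x) = 0$ are immediate over $\mathbb{F}_2^n$, while $\omega_2(x) > 0$ for $x \neq 0$ since $m, k > 0$, so $d_2$ is a semimetric. For decoding equivalence, the affine map $t \mapsto mt + k$ is strictly increasing on $\mathbb{R}_{\geq 0}$, so on nonzero vectors $\omega_1$ and $\omega_2$ share the same weak ordering, and both vanish only at $0$; as each column of a translation invariant distance matrix is a permutation of the full list of weights, Corollary \ref{pro:equiv} delivers $d_1 \sim d_2$.

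The main obstacle is the opening reformulation: recognizing that a linear isometry into some Hamming cube is nothing other than a realization of the symmetric-difference set pattern built from $\omega_1$, so that Theorem \ref{teo:sys} can be brought to bear. Once that dictionary is in place, the observation following Theorem \ref{teo:sys} that $mc + \bm{\overrightarrow{k}}$ preserves the weak ordering of $c$ automatically yields the decoding equivalence of $d_1$ and $d_2$.
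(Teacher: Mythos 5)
Your proposal is correct and follows essentially the same route as the paper's proof: encode the weight as a vector indexed by nonempty subsets, rationalize it up to decoding equivalence, apply Theorem \ref{teo:sys} to realize $(\bm{\blacktriangle_n}, mc+\bm{\overrightarrow{k}}, n)$ by sets $A_1,\ldots,A_n$, and let the linear map $e_i \mapsto \chi_{A_i}$ give the embedding. You simply spell out a few verifications (positivity of $\omega_2$, the column-permutation argument for $d_1 \sim d_2$) that the paper leaves implicit.
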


\begin{proof}
Denote by $\{e_1,e_2,\cdots, e_n\}$ the standard basis of $\mathbb{F}_2^n$.

Let, for every $I \subseteq [n]$,
\[ \delta_I = \omega_1 \left ( \sum_{i \in I} e_i \right )  .\]

Without loss of generality we can assume that $\delta_I \in \mathbb{Q}_+$, since, by Proposition \ref{prop:ineq}, only the order relation between the values matters.

Again by proposition \ref{prop:ineq}, for any given $m,k \in \mathbb{Z}_+$ if , for every $I\subseteq [n]$,
\[ w_2 \left ( \sum_{i \in I} e_i \right ) = m \delta_I + k ,\]
then $d_1 \sim d_2$ since the ordering of the distances is preserved.

Theorem \ref{teo:sys} ensures that there exists $m,k \in \mathbb{Z}_+$ such that $(\bm{\blacktriangle_{n}},m \delta + \bm{\overrightarrow{k}},n)$ is realizable. Thus, there exists a family of sets, $\mathcal{F} = \{ A_1, A_2, \ldots , A_n \}$, such that, for every $I \subseteq [n]$, 
\[ | \triangle_{i \in I} A_i | = m \delta_{I} + k .\]
Let $N=|\cup_i A_i |$ and $f: \mathbb{F}_2^n \rightarrow 2^N$ be such that $f(e_i) = A_i$.
Then, $supp^{-1} \circ f$ is a linear embedding from $(\mathbb{F}_2^n , d_2)$ (where $d_2$ is decoding equivalent to $d_1$) into the $N$ dimensional Hamming cube. The requirement that $d_1$ must be a semimetric is needed for $f$ to be injective.
\end{proof}

We now prove the result for any semimetric.

\begin{myteo}\label{teo:embedding}
Let $d_1$ be a semimetric over $[n]$. Then there exists a distance $d_2$ such that $d_1 \sim d_2$ and $d_2$ is isometrically embeddable into the Hamming cube.
\end{myteo}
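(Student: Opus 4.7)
The plan is to reduce the general case to Theorem \ref{teo:translation embedding} by extending $d_1$ to a translation-invariant semimetric on an ambient $\mathbb{F}_2$-vector space. Specifically, I would embed $[n]$ into $\mathbb{F}_2^n$ via the injection $\iota(i) = e_i$, using the standard basis. The differences $\iota(i) + \iota(j) = e_i + e_j$ for $i < j$ are precisely the weight-$2$ vectors and are pairwise distinct, so distances specified on pairs in the image impose no mutual constraints among themselves.

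Next, I would define a weight function $\tilde\omega : \mathbb{F}_2^n \to \mathbb{R}_{\geq 0}$ by setting $\tilde\omega(0) = 0$, $\tilde\omega(e_i + e_j) = d_1(i,j)$ for $i < j$, and $\tilde\omega(x)$ equal to an arbitrary positive value on the remaining nonzero vectors. The induced translation-invariant distance $\tilde{d}(x,y) = \tilde\omega(x+y)$ is then a semimetric on $\mathbb{F}_2^n$ (symmetry is automatic since $-x = x$ in $\mathbb{F}_2^n$, and $\tilde{d}(x,y) = 0 \iff x = y$ by construction of $\tilde\omega$), and it satisfies $\tilde{d}(\iota(i), \iota(j)) = d_1(i,j)$ for every $i,j \in [n]$.

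Applying Theorem \ref{teo:translation embedding} to $\tilde{d}$ produces a translation-invariant semimetric $\tilde{d}_2$ on $\mathbb{F}_2^n$ with $\tilde{d}_2 \sim \tilde{d}$, together with a linear isometric embedding $\psi : (\mathbb{F}_2^n, \tilde{d}_2) \to H^M$ into the Hamming cube. Setting $d_2(i,j) := \tilde{d}_2(\iota(i), \iota(j))$ and $\phi := \psi \circ \iota$, I would conclude that $\phi$ isometrically embeds $([n], d_2)$ into $H^M$, with injectivity of $\phi$ following from $\tilde{d}_2$ being a semimetric and $\iota$ being injective.

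The remaining step, and the only one that requires a small argument, is to verify that $d_1 \sim d_2$ on $[n]$. By Proposition \ref{prop:ineq}, decoding equivalence amounts to preservation of the strict weak orderings $d(i,j) < d(k,j)$ within columns, and this property is inherited by any restriction to a subset of indices. Since $\tilde{d} \sim \tilde{d}_2$ on all of $\mathbb{F}_2^n$, and since $d_1$ and $d_2$ are the pullbacks of $\tilde{d}$ and $\tilde{d}_2$ along $\iota$ (which matches columns indexed by $\iota([n])$ to columns indexed by $[n]$), the equivalence descends to $d_1 \sim d_2$. I do not expect any serious obstacle; the argument is essentially a packaging of Theorem \ref{teo:translation embedding} with a careful choice of ambient extension that realizes $d_1$ as the restriction of a translation-invariant semimetric.
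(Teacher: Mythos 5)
Your proposal is correct and takes essentially the same approach as the paper: the paper's proof likewise extends $d_1$ to a translation-invariant semimetric on $\mathbb{F}_2^n$ via $i \mapsto e_i$, assigning $d_1(i,j)$ to the weight-two vectors $e_i + e_j$ and a dummy positive value (namely $1$) to all other nonzero vectors, and then invokes Theorem \ref{teo:translation embedding}. The only difference is that you make explicit the (correct) observation that decoding equivalence restricts to subsets of points, which the paper leaves implicit.
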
 

\begin{proof}
Let, for every $I \subseteq [n]$,
\[
\delta_I = \left\{\begin{matrix}
d(i,j) & \text{if} \hspace{4mm} I = \{ i,j \} \hspace{4mm} \\ 
1 & \text{otherwise}
\end{matrix}\right.
\]
By adding these dummy variables, we can now apply Theorem \ref{teo:translation embedding}.
\end{proof}

We now show an example of an embedding of a translation invariant metric into the Hamming cube, using the methods described in Theorem \ref{teo:translation embedding}.

\begin{myexe} \label{exe:embed}

Consider the following translation invariant metric $d$ over $\mathbb{F}_2^3$, with weight $\omega$. 

\[
\begin{matrix}
\omega(001) = 3 \hspace{5mm} & \omega(011) = 3 \hspace{5mm} & \omega(111) = 3 \\ 
\omega(010) = 2 \hspace{5mm} &\omega(101) = 3 \hspace{5mm} & \\ 
\omega(100) = 1 \hspace{5mm} & \omega(110) = 2  \hspace{5mm}& 
\end{matrix}
\]

This corresponds to the following set pattern (which appears in Example \ref{exe:teo1}).

\[
\begin{matrix}
\blacktriangle_1 (x) = 3 \hspace{5mm}& \blacktriangle_{12} (x) = 3 \hspace{5mm}& \blacktriangle_{123}(x) = 3 \\ 
\blacktriangle_{2}(x) = 2 \hspace{5mm}& \blacktriangle_{13}(x) = 3 \hspace{5mm}& \\ 
\blacktriangle_{3}(x) = 1 \hspace{5mm}& \blacktriangle_{23}(x) = 2 \hspace{5mm}& 
\end{matrix}
\]

Using Equation \ref{eq:dif} recursively we get the equivalent set pattern.

\[
\begin{matrix}
\Cap_{1}(x) = 3 \hspace{5mm}& \Cap_{12}(x) = 1 \hspace{5mm}& \Cap_{123}(x) = \frac{1}{4} \hspace{5mm}\\ 
\Cap_{2}(x) = 2 \hspace{5mm}& \Cap_{13}(x) = \frac{1}{2} \hspace{5mm}& \hspace{5mm}\\ 
\Cap_{3}(x) = 1 \hspace{5mm}& \Cap_{23}(x) = \frac{1}{2} \hspace{5mm}& \hspace{5mm}
\end{matrix}
\]

This solution is given by

\[
\begin{matrix}
\vspace{3pt}
x_{1} = \frac{7}{4} \hspace{5mm}& x_{12} = \frac{3}{4} \hspace{5mm}& x_{123} = \frac{1}{4} \hspace{5mm}\\ \vspace{3pt}
x_{2} = \frac{3}{4} \hspace{5mm}& x_{13} = \frac{1}{4} \hspace{5mm}& \hspace{5mm}\\ \vspace{3pt}
x_{3} = \frac{1}{4} \hspace{5mm}& x_{23} = \frac{1}{4} \hspace{5mm}& \hspace{5mm}
\end{matrix}
\]

Taking $x' = 2x + \frac{1}{4} \bm{\overrightarrow{1}} $, by Theorem \ref{teo:sys}, $(\bm{\blacktriangle_{n}},\bm{\blacktriangle_{n}}(x'),n)$ is realizable with
 
\[
\begin{matrix}
\vspace{3pt}
x'_{1} = 4 \hspace{5mm}& x'_{12} = 2 \hspace{5mm}& x'_{123} = 1 \hspace{5mm}\\ \vspace{3pt}
x'_{2} = 2 \hspace{5mm}& x'_{13} = 1 \hspace{5mm}& \hspace{5mm}\\ \vspace{3pt}
x'_{3} = 1 \hspace{5mm}& x'_{23} = 1 \hspace{5mm}& \hspace{5mm}
\end{matrix}
\]

By Theorem \ref{teo:translation embedding}, this corresponds to the linear embedding, $f:\mathbb{F}_2^3 \mapsto H^{12}$
\[ f(100) = 111111110000 \]
\[ f(010) = 000001111110 \]
\[ f(001) = 000011000011 \]
and decoding 
in $(\mathbb{F}_2^3 , d)$ is equivalent to decoding in the image of $f$ in $(H^{12} , d_H)$.

\end{myexe}

\section{Geometric Approach} \label{geo}

The $12$-dimensional embedding in Example \ref{exe:embed} is not optimal, in the sense of minimizing the dimension of the embedding space, i.e. the Hamming cube. We can find an optimal $11$-dimensional embedding by taking a geometric approach.

Consider all the translation invariant semimetrics $d$ over $\mathbb{F}_2^3$, with weight $\omega$. Determining which semimetrics are Hamming cube embeddable is equivalent to determining for which $\delta \in \mathbb{R}^{2^n-1}$ the set pattern $((\bm{\blacktriangle_{n}},\delta,n)$ is realizable. This set pattern  satisfies the following equations

\[
\begin{matrix}
\delta_{A} = x_A + x_{AB} + x_{AC} + x_{ABC} & \hspace{10pt} \delta_{AB} = x_A + x_{B} + x_{AC} + x_{BC}  \\ 
\delta_B = x_B + x_{AB} + x_{BC} + x_{ABC}& \hspace{10pt} \delta_{AC} = x_A + x_{C} + x_{AB} + x_{BC} & \\ 
\delta_C = x_C + x_{AC} + x_{BC} + x_{ABC} & \hspace{10pt} \delta_{BC} = x_B + x_{C} + x_{AB} + x_{AC} \\
\end{matrix}
\]
\[ \delta_{ABC} = x_A + x_{B} + x_{C} + x_{ABC} \]

which can be rewritten as

\[
\delta =
\begin{pmatrix}
\delta_A \\ 
\delta_B \\ 
\delta_C \\ 
\delta_{AB} \\ 
\delta_{AC} \\ 
\delta_{BC} \\ 
\delta_{ABC}
\end{pmatrix}
=
\begin{pmatrix}
1 & 0 & 0 & 1 & 1 & 0 & 1\\ 
0 & 1 & 0 & 1 & 0 & 1 & 1\\ 
0 & 0 & 1 & 0 & 1 & 1 & 1\\ 
1 & 1 & 0 & 0 & 1 & 1 & 0\\ 
1 & 0 & 1 & 1 & 0 & 1 & 0\\ 
0 & 1 & 1 & 1 & 1 & 0 & 0\\ 
1 & 1 & 1 & 0 & 0 & 0 & 1
\end{pmatrix}
.
\begin{pmatrix}
x_{A}\\ 
x_{B}\\ 
x_{C}\\ 
x_{AB}\\ 
x_{AC}\\ 
x_{BC}\\ 
x_{ABC}
\end{pmatrix}.
=Tx
\]

The matrix $T$ can easily be inverted.

\[ T^{-1} =
\begin{pmatrix}
1 & 0 & 0 & 1 & 1 & 0 & 1\\ 
0 & 1 & 0 & 1 & 0 & 1 & 1\\ 
0 & 0 & 1 & 0 & 1 & 1 & 1\\ 
1 & 1 & 0 & 0 & 1 & 1 & 0\\ 
1 & 0 & 1 & 1 & 0 & 1 & 0\\ 
0 & 1 & 1 & 1 & 1 & 0 & 0\\ 
1 & 1 & 1 & 0 & 0 & 0 & 1
\end{pmatrix}^{-1}
=
\frac{1}{2^{2}}
\begin{pmatrix}
1 & -1 & -1 & 1 & 1 & -1 & 1\\ 
-1 & 1 & -1 & 1 & -1 & 1 & 1\\ 
-1 & -1 & 1 & -1 & 1 & 1 & 1\\ 
1 & 1 & -1 & -1 & 1 & 1 & -1\\ 
1 & -1 & 1 & 1 & -1 & 1 & -1\\ 
-1 & 1 & 1 & 1 & 1 & -1 & -1\\ 
1 & 1 & 1 & -1 & -1 & -1 & 1
\end{pmatrix}
\]

If we look at $\mathbb{R}_{ > 0}^7$ as the space of all translation invariant semimetrics, $T$ is an automorphism of this space, and the set of semimetrics which can be embedded into the Hamming cube is the integer cone
$\{ T x : x \in \mathbb{Z}_{\geq 0} \}$\footnote{It is known that Hamming cube embedable distances form an integer cone \cite{deza97}.}.

Since we are interested in embeding up to decoding equivalence we must see how this equivalence partitions the space. By Proposition \ref{prop:ineq}, each equivalence class will correspond to a weak ordering on the $7$ coordinates. This partitions the space into $13$ cones ($6$ $3$-dimensional, $6$ $2$-dimensional, and $1$ $1$-dimensional).

The dimension of an embedding is the $L_1$ weight of the vector corresponding to the minterms, i.e. if $\delta \in \{ T x : x \in \mathbb{Z}_{\geq 0} \}$,
then the dimension of the embedding is $|x|_1 = x_A + x_B + x_C + x_{AB} + x_{AC} + x_{BC} + x_{ABC}$.

If we denote by $Cone (\delta) = \{ \delta' \in \mathbb{R}_{ \geq 0}^7 : \delta \sim \delta' \}$, 
then finding the minimal dimensional embedding of a distance $\delta \in \mathbb{R}_{ \geq 0}^7$ (up to decoding equivalence) is an integer linear programming problem of the form
\[
\begin{matrix}
\text{Minimize:} & |x|_1\\ 
\text{subject to:} & \hspace{20pt} Tx \in Cone(\delta)
\end{matrix}
\]

\begin{myexe}
Consider the same translation invariant metric $d$ over $\mathbb{F}_2^3$, with weight $\omega$, as in Example \ref{exe:embed}. Let us find the minimal dimensional embedding (up to decoding equivalence) into the Hamming cube.

As we saw previously we can write the system as
\[
\delta =
\begin{pmatrix}
3\\ 
2\\ 
1\\ 
3\\ 
3\\ 
2\\ 
3
\end{pmatrix}
=
\begin{pmatrix}
1 & 0 & 0 & 1 & 1 & 0 & 1\\ 
0 & 1 & 0 & 1 & 0 & 1 & 1\\ 
0 & 0 & 1 & 0 & 1 & 1 & 1\\ 
1 & 1 & 0 & 0 & 1 & 1 & 0\\ 
1 & 0 & 1 & 1 & 0 & 1 & 0\\ 
0 & 1 & 1 & 1 & 1 & 0 & 0\\ 
1 & 1 & 1 & 0 & 0 & 0 & 1
\end{pmatrix}
.
\begin{pmatrix}
x'_{A}\\ 
x'_{B}\\ 
x'_{C}\\ 
x'_{AB}\\ 
x'_{AC}\\ 
x'_{BC}\\ 
x'_{ABC}
\end{pmatrix}.
\]
Solving it we get
\[x' =
\begin{pmatrix}
x'_{A}\\ 
x'_{B}\\ 
x'_{C}\\ 
x'_{AB}\\ 
x'_{AC}\\ 
x'_{BC}\\ 
x'_{ABC}
\end{pmatrix}
=
\frac{1}{4}
\begin{pmatrix}
7\\ 
3\\ 
1\\ 
3\\ 
1\\ 
1\\ 
1
\end{pmatrix}
.\]
We are searching for $x \in \mathbb{Z}_+$ which minimizes $|x|_1$, and such that $Tx \in Cone(\delta)$. By Proposition \ref{prop:ineq}, $x$ must satisfy the same weak ordering on its coordinates as $x'$, i.e.
\[ x_{ABC}=x_{BC}=x_{AC}=x_{C}<x_{AB}=x_{B}<x_{A} .\]
Thus, to minimize $|x|_1$ we must take
\[ x_{ABC}=x_{BC}=x_{AC}=x_{C} = 1 ,\]
\[ x_{AB}=x_{B} = 2 \hspace{5pt} \text{and} \hspace{5pt} x_{A} = 3.\]

This corresponds to the following Venn Diagram.

\begin{center}
\includegraphics[width=6cm]{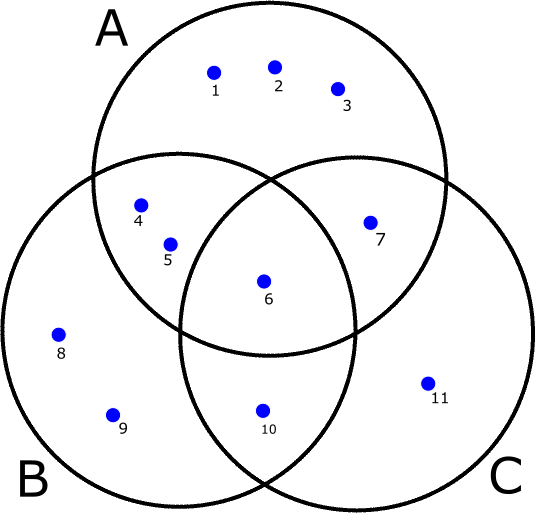}
\end{center}

The corresponding linear embedding $f: \mathbb{F}_2^3 \mapsto H^{11}$ is such that
\[ 001 \mapsto 11111110000 \]
\[ 010 \mapsto 00011101110 \]
\[ 100 \mapsto 00000110011 .\]
This embedding has lower dimension than the one in Example \ref{exe:embed} and it is optimal.
\end{myexe}

The case just discussed is typical. The $n$-dimensional case will behave equally. 

In the case of general distances, however, this geometric analysis requires more care. The equivalence class will still partition the space into cones, but they will not be as simple as for the translation invariant case.

\section{Future Work}

We list some possible topics for future work:
\begin{itemize}
\item Consider important families of channels (like the binary assymetric channel) and see under which conditions they are metriziable.
\item Investigate which properties are invariant under the decoding equivalence (like a code being perfect).
\item Use Hamming embeddings to better understand these properties (perfect codes are much better understood over the Hamming metric than for other metrics in general).
\item Count the number of distances up to decoding equivalence.
\end{itemize}

We also believe that the geometric approach might be useful in investigating the problems above.

\renewcommand{\abstractname}{Acknowledgements}
\begin{abstract}
Rafael G.L. D'Oliveira was supported by CAPES.

Marcelo Firer was partially supported by grant 2013/25977-7, S\~ao Paulo Research Foundation, (FAPESP) and grant 303985/2014-3, CNPq.

Both authors would like to thank Michel Deza.
\end{abstract}

\end{document}